\newtheoremstyle{note}
  {\topsep}               % ABOVE SPACE
  {\topsep}               % BELOW SPACE
  {}                      % BODY FONT
  {\parindent}            % INDENT (empty value is the same as 0pt)
  {\itshape}              % HEAD FONT
  {.}                     % HEAD PUNCTUATION
  {5pt plus 1pt minus 1pt}% HEAD SPACE
  {}
\theoremstyle{note}
\newtheorem{theorem}{Theorem}
\newtheorem{lemma}{Lemma}
\newtheorem{conjecture}{Conjecture}
\theoremstyle{definition}
\theoremstyle{remark}
\newtheorem{remark}{Remark}
\newcommand{\mrm}[1]{\mathrm{#1}}
\providecommand{\tr}{\operatorname{tr}}
\newcommand{\bbF}{\mathbb{F}}
\newcommand{\be}{\begin{equation}}
\newcommand{\ee}{\end{equation}}
\newcommand{\ba}{\begin{align}}
\newcommand{\ea}{\end{align}}
\def\<{\langle}  %% overiding the original command \<
\def\>{\rangle}  %% overiding the original command \>
\newcommand{\GL}[2]{\mrm{GL}(#1,#2)}
\newcommand{\AGL}[2]{\mrm{AGL}(#1,#2)}
\def\eqref#1{\textup{(\ref{#1})}}  %% overiding the original command \eqref
\newcommand{\eref}[1]{Eq.~\textup{(\ref{#1})}}
\newcommand{\thref}[1]{Theorem~\ref{#1}}
\newcommand{\Thref}[1]{Theorem~\ref{#1}}
\newcommand{\lref}[1]{Lemma~\ref{#1}}
\newcommand{\cref}[1]{Conjecture~\ref{#1}}
\newcommand{\Cref}[1]{Conjecture~\ref{#1}}
\newcommand{\rcite}[1]{Ref.~\cite{#1}}
\newcommand{\rscite}[1]{Refs.~\cite{#1}}
\begin{document}
	\title{Nonexistence of  sharply covariant mutually unbiased bases in odd prime dimensions}
	\author{Huangjun Zhu}
	\email{hzhu@pitp.ca}
	\affiliation{Perimeter Institute for Theoretical Physics, Waterloo, Ontario N2L 2Y5, Canada}
	\pacs{03.67.-a, 02.10.De, 03.65.-w}

%03.67.-a quantum information
%03.65.Wj quantum tomography, state reconstruction
%02.10.De algebraic structure
%03.65.-w quantum mechanics

%03.67.Mn Entanglement production, characterization and manipulation
%03.65.Ud Entanglement and quantum nonlocality
%(e.g. EPR paradox, Bell's inequalities, GHZ states, etc.)
%(for entanglement production in quantum information, see 03.67.Mn);

\begin{abstract}

Mutually unbiased bases (MUB) are useful in a number of research areas. The symmetry of MUB is an elusive and interesting subject.
A (complete set of) MUB in dimension $d$ is sharply covariant if it can be generated by a group of order $d(d+1)$ from a basis state. Such MUB, if they exist, would be most appealing to theoretical studies and practical applications. Unfortunately, they seem to be quite rare. Here we  prove that no MUB in odd prime dimensions is sharply covariant, by virtue of clever applications of Mersenne primes, Galois fields, and Frobenius groups. This conclusion provides valuable insight about the symmetry of MUB and the geometry of quantum state space. It complements and strengthens the earlier result of the author that only two stabilizer MUB are sharply covariant. Our study leads to the conjecture that no MUB other than those in dimensions 2 and 4 is sharply covariant.

\end{abstract}
\date{\today}
\maketitle

In a $d$-dimensional Hilbert space,
two bases $\{|\psi_j\rangle\}$ and
$\{|\phi_k\rangle\}$ are \emph{mutually unbiased} if all the transition
probabilities $|\langle \psi_j|\phi_k\rangle|^2$ across their basis states are equal to $1/d$ \cite{Ivan81,WootF89,DurtEBZ10}.
Since their discovery, mutually unbiased bases (MUB) have found numerous applications, such as in the determination of quantum states, in the study of quantum kinematics, and in the construction of generalized Bell states (see  \rcite{DurtEBZ10} for a review).
Any  MUB in dimension $d$ contains at most  $d+1$ bases; the MUB is complete if the upper bound is attained.
In the rest of the paper by an MUB we shall mean  such a complete set.
When $d$ is a prime power,
 MUB can be constructed explicitly
\cite{Ivan81,WootF89,DurtEBZ10}, and most MUB can be constructed from stabilizer states \cite{Gott97the,BandBRV02,LawrBZ02, GodsR09, Kant12, Zhu15Sh, Zhu15M}.
All known constructions rely on the existence of Galois fields, which admit no generalization to any other dimension.

An MUB  in dimension $d$ is  \emph{group  covariant} if it can be generated from a single state---the  \emph{fiducial state}---by a group composed of unitary transformations.  The MUB is   \emph{sharply covariant} if in addition the generating group (modulo phase factors) can be chosen to have the minimum possible order of $d(d+1)$. Sharply covariant MUB, if they exist, would be most appealing from both theoretical and practical points of  view.  The basis states in such an MUB can be labeled naturally by group elements, and all of them can be generated from a single basis state by applying transformations in the minimal generating group without repetition.
Sharp covariance is highly desirable not only for MUB, but also for many other elusive discrete symmetric structures, such as discrete Wigner functions \cite{Woot87, GibbHW04, Zhu15P}, symmetric informationally complete measurements (SICs) \cite{Zaun11, ReneBSC04,ScotG10, Zhu12the, ApplFZ15G}, and other quasi-probability representations of quantum mechanics \cite{FerrE08,FerrE09}.

Although many MUB are group covariant,  it turns out that sharply covariant MUB are quite rare.
Among all  stabilizer MUB, only two (those in dimensions 2 and 4) are sharply covariant up to unitary equivalence, according to a recent result of  the author \cite{Zhu15Sh}.
Is there any other MUB that is sharply covariant if we go beyond stabilizer MUB? This question is of intrinsic interest to understanding the geometry of quantum state space and may have potential implications for foundational studies. However, it is very difficult to answer this question in general since most existing literature on this subject has focused on stabilizer  MUB \cite{BandBRV02,LawrBZ02, GodsR09,DurtEBZ10, Kant12, Zhu15Sh, Zhu15M}, in particular, the one that is equivalent to the MUB originally constructed in \rscite{Ivan81,WootF89}. Little is known about MUB beyond stabilizer construction.

In this paper we prove that no MUB in odd prime dimensions is sharply covariant without any additional assumption.  This conclusion represents one of only a few results that apply to all MUB in odd prime dimensions.  It complements and strengthens the earlier result of the author concerning stabilizer MUB \cite{Zhu15Sh}. Our study  leads to the conjecture that no MUB other than those in dimensions 2 and 4 is sharply covariant. This conjecture, if it holds, would reshape our understanding about the characteristics  of MUB.
The techniques introduced here may set the stage  for future research  on the symmetry of  MUB and related subjects.
Our study also
sharpens the  contrast between MUB  and SICs, all known examples of which are sharply covariant \cite{Zaun11, ReneBSC04,ScotG10, Zhu12the, ApplFZ15G}. In addition, all group covariant SICs in prime dimensions are covariant with respect to Heisenberg-Weyl groups and are thus sharply covariant~\cite{Zhu10}. Our work is of intrinsic interest to understanding the geometry of quantum state space.

In the course of our study, we shall reveal  surprising
connections  between MUB  and a number of other subjects, such as
Mersenne primes \cite{CaldH09book}, Galois fields \cite{Ash07book}, and Frobenius groups \cite{Feit67book,KurzS04book}. All these subjects are of broad interest to researchers from diverse fields.
Although Galois fields feature prominently in the study of MUB, their appearance in the current context exhibits  quite a different nature.

\begin{theorem}\label{thm:MUBcov}
	No MUB in any odd prime dimension is sharply covariant.
\end{theorem}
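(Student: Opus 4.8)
\emph{Proof proposal.} The plan is to convert sharp covariance into a rigid group‑theoretic picture and then close off every branch. Suppose a sharply covariant MUB existed in dimension $d=p$, generated from a fiducial by a group $G$ of order $d(d+1)$ acting on $\mathbb{C}^d$ through a projective representation $U$. Since $G$ produces all $d(d+1)$ MUB vectors from the fiducial without repetition, it acts regularly on them. Two distinct MUB vectors share a basis precisely when they are orthogonal, so $G$ permutes the $d+1$ bases; fix one, $B_0$, and set $H=\mathrm{Stab}_G(B_0)$, so the action on bases is transitive, $|H|=d=p$, hence $H\cong\bbZ_p$. Because $G$ is free on vectors, $H$ permutes the states of $B_0$ freely, so $\tr U_h=0$ for $h\neq1$ and $U|_H$ is the regular representation of $\bbZ_p$. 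I would also use that a complete MUB is a projective $2$-design, so its $d(d+1)$ rank-one projectors span $B(\mathbb{C}^d)$; hence $B(\mathbb{C}^d)\cong U\otimes\bar U$ is a quotient of the regular representation $\mathbb{C}[G]$, the trivial constituent occurs there with multiplicity $\le1$, and since that multiplicity equals the dimension of the commutant of $U$, the representation $U$ is irreducible. Finally, replacing the fiducial by a suitable translate, we may assume it lies in $B_0$.

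Next I would split on how $H$ acts on the remaining $p$ bases: orbit sizes divide $p$, so $H$ either fixes all of them or is transitive on them. If $H$ fixes every basis, then each basis has stabilizer $H$, so $H\trianglelefteq G$; since $U|_H$ is the regular representation of $\bbZ_p$ it contains every character of $H$ exactly once, yet Clifford's theorem applied to the irreducible $U$ and the normal subgroup $H$ forces those characters to form a single $G$-conjugacy orbit — impossible, because the trivial character is a fixed point of conjugation and $p\ge3$. Hence $H$ is transitive on the other $p$ bases, $G$ is sharply $2$-transitive of degree $d+1$, and in particular $G$ is a Frobenius group with complement $H$.

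Now the Frobenius and Galois input. By the theorem of Frobenius, $G$ has a regular normal (kernel) subgroup $N$ with $|N|=d+1$; since $H$ is transitive on $N\setminus\{1\}$, all nonidentity elements of $N$ are conjugate, hence of a single prime order $r$, so $N$ is an $r$-group whose $H$-invariant centre meets $N\setminus\{1\}$, forcing $N$ elementary abelian of order $r^m=d+1$. Because $d$ is an odd prime, $d+1$ is even, so $r=2$ and $d=2^m-1$ is a \emph{Mersenne prime} (with $m$ prime); moreover $H\cong\bbZ_{2^m-1}$ acts faithfully and simply transitively on $N\setminus\{1\}$ by $\bbF_2$-linear maps, i.e. as a Singer cycle, so after identifying $N$ with $\bbF_{2^m}$ one gets $G\cong\bbF_{2^m}\rtimes\bbF_{2^m}^{\times}=\AGL{1}{2^m}$ and the Galois field $\bbF_{2^m}$ governs the whole structure. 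A parity argument — a genuinely projective representation of the elementary abelian $2$-group $N$ has even dimension, whereas $\dim U=2^m-1$ is odd — lets us take $U$ honest; Clifford then pins $U|_N$ to the multiplicity-free sum of all $2^m-1$ nontrivial characters of $N$ (it cannot be a multiple of the trivial character, else $U$ would factor through the cyclic $G/N$), each of which is $\{\pm1\}$-valued since $N$ is a $2$-group.

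To finish I would spring the trap. In the common $N$-eigenbasis $\{e_\chi\}_{\chi\neq1}$ write the fiducial as $v=\sum_\chi v_\chi e_\chi$ and set $p_\chi=|v_\chi|^2\ge0$, $\sum_\chi p_\chi=1$. The vectors $U_nv$, $n\in N$, form one member of each of the $d+1=2^m$ bases, so for distinct $n,n'$ mutual unbiasedness gives $|\langle U_nv|U_{n'}v\rangle|^2=1/d$; the overlap itself equals $\sum_\chi p_\chi\,\chi(n-n')$, which is \emph{real}. Thus $\sum_\chi p_\chi\,\chi(c)=\varepsilon_c/\sqrt{2^m-1}$ with $\varepsilon_c\in\{\pm1\}$ for every $c\in N\setminus\{0\}$, while the value at $c=0$ is $1$. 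Summing over all $c\in N$ and using $\sum_{c\in N}\chi(c)=0$ for each nontrivial $\chi$ gives $1+\tfrac{1}{\sqrt{2^m-1}}\sum_{c\neq0}\varepsilon_c=0$, so $\sum_{c\neq0}\varepsilon_c=-\sqrt{2^m-1}$; the left side is an integer, so $2^m-1$ must be a perfect square, which is impossible for $m\ge2$ (reduce $2^m=t^2+1$ mod $4$) — contradicting $d=2^m-1\ge3$. This closes the last branch and proves the theorem. The hard part is precisely this last step: distilling from mutual unbiasedness the single real Fourier identity on $N$ whose self-consistency forces the Mersenne number $2^m-1$ to be a square; the remaining delicate points are the reduction to an honest $U$ with $U|_N$ as stated, and the $2$-design fact behind the irreducibility of $U$.
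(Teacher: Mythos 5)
Your proposal follows the same skeleton as the paper: irreducibility of the generating collineation group from informational completeness, a group-theoretic reduction showing the group must be Frobenius with elementary abelian kernel of order $p+1$, hence $p=2^m-1$ a Mersenne prime and $G\cong\AGL{1}{2^m}$, and a final irrationality argument ($\sqrt{2^m-1}\notin\mathbb{Q}$) played out in the common eigenbasis of the normal $2$-subgroup. But there is a genuine gap, and it occurs twice, at the two places where you apply ordinary character-level Clifford theory to $U$: the generating group is a collineation group, so $U$ is a priori only a \emph{projective} representation, and conjugation moves the eigencharacters of a normal abelian subgroup only up to a character twist, $U(g)U(n)U(g)^{-1}=\mu_g(n)\,U(gng^{-1})$ with $\mu_g\in\hat N$. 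Hence ``the trivial character is a fixed point of conjugation'' is not available. Concretely: (i) in the branch where $H$ is normal, the $p$ characters of $H$ (each occurring once) can perfectly well form a single orbit of the twisted, affine action on $\hat H\cong\bbF_p$, so your claimed contradiction does not follow; the paper eliminates this case instead by a centralizer/degree-bound argument (\lref{lem:SylowNum}), which is valid projectively. (ii) In the endgame, your parity argument only trivializes the cocycle on $N$, not on $G$, so you may not ``take $U$ honest''; and with $U|_N$ honest the decomposition is determined only up to tensoring with a character of $N$ (replace $U(g)$ by $\lambda(g)U(g)$ with $\lambda|_N=\nu\neq 1$ a character: the constituents become $\{\nu\chi:\chi\neq1\}$, which \emph{does} contain the trivial character). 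In such a gauge your sum over $c\in N$ acquires the term $2^mp_{\mathbf{1}}$ on the right-hand side and yields no contradiction. Invoking triviality of the Schur multiplier of $\AGL{1}{2^m}$ wholesale would also not cover $d=3$, since $\AGL{1}{4}\cong A_4$ has multiplier $\bbZ_2$; the paper's footnote makes a point of not needing that fact.

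Both gaps are fixable, and the fixes are essentially what the paper does. Multiplicity-freeness of $U|_N$ follows from the character-sum bound $\sum_{n\in N}|\tr U(n)|^2\le\sum_{g\in G}|\tr U(g)|^2=|G|$, valid for irreducible projective representations (\lref{lem:AffinePermu}); then exactly one character of $N$ is missing, and you must either fix the gauge so that the missing one is the trivial character, or—more cleanly, as in \lref{lem:MUBcov2n1}—sidestep the question entirely: rephase each $U(n)$, $n\in N\setminus\{1\}$, to have spectrum $\pm1$ with trace $-1$; since the multiplicities of $+1$ and $-1$ differ, conjugation permutes these normalized operators exactly, so $\sum_{n\neq1}U(n)$ commutes with the irreducible $U(G)$ and equals $-I$ by Schur's lemma, and taking the fiducial expectation gives $-1=m/\sqrt{2^m-1}$ with $m\in\bbZ$ directly. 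As written, however, your elimination of the normal-$H$ case and your exclusion of the trivial character both rest on an unestablished lift of $U$ to an ordinary representation, so the proof is incomplete at exactly the points you yourself flagged as delicate.
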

\begin{remark}
The MUB in dimension 2 is sharply covariant with respect to any order-6 subgroup in its symmetry group \cite{Zhu15Sh}. By symmetry group  we mean the group of all unitary transformations modulo phase factors that leave the MUB invariant.
\end{remark}
Before proving \thref{thm:MUBcov}, we need to introduce several technical tools.  A positive-operator-valued measure (POVM) $\{\Pi_j\}$ is a set of positive operators that sum up to the identity.
A POVM  $\{\Pi_j\}$ is informationally complete (IC) if all states can be distinguished by the  probabilities of measurement outcomes or, equivalently, if the outcomes  $\Pi_j$ span the operator space. The POVM is covariant with respect to a given group $G$ (composed of unitary operators) if it can be generated from one of the outcomes, say $\Pi_1$, under the action of $G$. Since operators that differ only by overall phase factors implement the same transformation, it is convenient to identify such operators. The group obtained from $G$ after such identification is called the collineation group of $G$ and denoted by $\overline{G}$. The following lemma establishes a simple connection between informationally completeness of a group covariant POVM and the irreducibility of the generating group; see \rcite{Zhu15Sh} for a proof.
\begin{lemma}\label{lem:ICirr}
Suppose $\{\Pi_j\}$ is an IC POVM that is covariant with respect to the group $\overline{G}$. Then $\overline{G}$ is irreducible.
\end{lemma}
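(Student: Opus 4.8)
The plan is to prove the statement directly by showing that the commutant of the underlying (projective) representation is trivial, and then invoking Schur's lemma in the form: \emph{a unitary representation whose commutant consists only of scalars is irreducible}. Concretely, I would let $X$ be any operator that commutes with every representative unitary $U$ of an element of $\overline{G}$, that is, $U X U^\dagger = X$ for all such $U$. Note that this condition is manifestly independent of the phase of $U$, so it is a genuine property of the collineation group $\overline{G}$. The goal is to show that any such $X$ must be a scalar multiple of the identity.

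The key observation is that covariance forces the number $\Tr(X\Pi_j)$ to be independent of $j$. Writing each outcome as $\Pi_j = U_j \Pi_1 U_j^\dagger$ for a representative $U_j$, cyclicity of the trace gives
\begin{equation}
\Tr(X\Pi_j) = \Tr\bigl(X U_j \Pi_1 U_j^\dagger\bigr) = \Tr\bigl(U_j^\dagger X U_j\, \Pi_1\bigr) = \Tr(X\Pi_1),
\end{equation}
where the last step uses $U_j^\dagger X U_j = X$. The same computation with $X$ replaced by the identity shows that $\Tr(\Pi_j) = \Tr(\Pi_1) =: t$ is likewise constant, and $t>0$ since the $\Pi_j$ are nonzero positive operators. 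I would then form the corrected operator
\begin{equation}
X' := X - \frac{\Tr(X\Pi_1)}{t}\, I,
\end{equation}
which by construction satisfies $\Tr(X'\Pi_j) = 0$ for every $j$.

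Informational completeness now closes the argument. Since the outcomes $\{\Pi_j\}$ span the full operator space $\mathcal{B}(\mathcal{H})$, a linear functional $A \mapsto \Tr(X'A)$ that vanishes on every $\Pi_j$ must vanish on all of $\mathcal{B}(\mathcal{H})$; by non-degeneracy of the Hilbert--Schmidt pairing this forces $X'=0$, i.e.\ $X = (\Tr(X\Pi_1)/t)\,I$ is scalar. Hence the commutant of $\overline{G}$ is trivial, and irreducibility follows from Schur's lemma.

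The only point demanding care---rather than a genuine obstacle---is the converse direction of Schur's lemma, which I would justify using unitarity: if $\overline{G}$ preserved a proper nontrivial subspace $W$, then unitarity would make $W^\perp$ invariant as well, so the orthogonal projector onto $W$ would be a non-scalar operator commuting with every $U$, contradicting triviality of the commutant. This is precisely where the hypothesis that $\overline{G}$ consists of unitaries is used, and it also explains why passing to the collineation group is harmless: conjugation $A \mapsto U A U^\dagger$ is insensitive to the phase of $U$, so every statement above descends unambiguously to $\overline{G}$.
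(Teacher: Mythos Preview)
Your proof is correct. The paper itself does not supply a proof of this lemma but defers to \rcite{Zhu15Sh}; your argument---showing that any operator commuting with all of $\varphi(\overline{G})$ pairs to a constant against every $\Pi_j$, subtracting off the appropriate multiple of the identity, and then using informational completeness plus nondegeneracy of the Hilbert--Schmidt form to force the commutant to be scalar---is the standard route and almost certainly matches what appears in the cited reference. Your explicit check that the converse of Schur's lemma is available here because the representation is unitary (so an invariant $W$ yields an invariant $W^\perp$ and hence a nonscalar projector in the commutant) is a nice touch, as is your remark that conjugation is phase-insensitive so everything descends to $\overline{G}$.
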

\begin{remark}
	This lemma still applies even if $\overline{G}$ contains antiunitary transformations.
\end{remark}

Since the POVM constructed from all the  basis states of an MUB is IC, any group that can generate the MUB is necessarily irreducible according to \lref{lem:ICirr}. If an MUB in prime dimension $p$ is sharply covariant with respect to a group, then the generating group (modulo phase factors) has order $p(p+1)$ and thus defines a faithful irreducible projective representation of a group of order $p(p+1)$. To achieve our goal, therefore, it is instructive to look into those groups of order $p(p+1)$ that admit  a faithful irreducible projective representation of degree $p$. The following theorem shows that there is at most one such  group up to isomorphism.

\begin{theorem}\label{thm:MGFS}
	Let  $G$ be a group of order $p(p+1)$ with $p$ an odd prime. Suppose $G$ has a faithful irreducible projective  representation of degree $p$. Then $p$ is a Mersenne prime and  $G$ is isomorphic to the one-dimensional affine general linear group $\AGL{1}{p+1}$  over the Galois field with $p+1$ elements.
\end{theorem}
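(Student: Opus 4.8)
The plan is to pin down the isomorphism type of $G$ by pure group theory, using the representation only to control the Sylow structure. Since $p\nmid p+1$, a Sylow $p$-subgroup $P$ of $G$ has order $p$, hence is cyclic, and the number $n_p$ of Sylow $p$-subgroups satisfies $n_p\equiv 1\pmod p$ and $n_p\mid p+1$, so $n_p\in\{1,p+1\}$. The first and crucial step is to exclude $n_p=1$, and this is the only place where the faithful irreducible projective representation $\rho$ of degree $p$ is used. Suppose $P\trianglelefteq G$. Since $H^2(\mathbb{Z}/p;\mathbb{C}^\times)$ is trivial, I may modify the underlying cocycle by a coboundary so that $\rho$ restricts on $P$ to an ordinary linear representation; projective Clifford theory then writes $\rho|_P$ as $e$ copies of $\bigoplus_{\chi\in\mathcal O}\chi$ for a single $G$-orbit $\mathcal O$ of characters of $P$. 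The conjugation action of $G$ on $\widehat P$ has kernel $C_G(P)\supseteq P$, so $|G/C_G(P)|$ divides $\gcd(p+1,p-1)=2$ and $|\mathcal O|\in\{1,2\}$. If $|\mathcal O|=1$ then $P$ acts by scalars, contradicting faithfulness of the projective representation; if $|\mathcal O|=2$ then $\deg\rho=2e$ is even, contradicting that $p$ is odd. Hence $n_p=p+1$.

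With $n_p=p+1$ the normalizer $N_G(P)$ has index $p+1$ and order $p$, so $N_G(P)=P$, and two distinct Sylow $p$-subgroups, having prime order, intersect trivially. Consequently the conjugation action of $G$ on its $p+1$ Sylow $p$-subgroups realizes $G$ as a Frobenius group with complement of order $p$: its Frobenius kernel $K$ is normal of order $p+1$, and $G=K\rtimes C$ with $C\cong\mathbb{Z}/p$ acting fixed-point-freely on $K$. By Thompson's theorem $K$ is nilpotent, hence a direct product of its Sylow subgroups, each characteristic and so $C$-invariant. Fixing a prime $q\mid|K|$ (so $q\neq p$) and a Sylow $q$-subgroup $Q$ of $K$, a generator of $C$ acts on $\bar Q:=Q/\Phi(Q)\cong\mathbb{F}_q^{\,d}$ coprimely and fixed-point-freely, hence semisimply and without the eigenvalue $1$; therefore $\bar Q$ contains an irreducible $\mathbb{F}_q[C]$-submodule of dimension $e:=\operatorname{ord}_p(q)$, whence $q^e\le q^d\le|Q|\le|K|=p+1$. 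But $q^e\equiv1\pmod p$ forces $q^e=p+1$, which collapses every inequality above: $|Q|=|\bar Q|=q^e=p+1=|K|$, so $K=Q$, $\Phi(Q)=1$, $K$ is elementary abelian, and $p+1=q^e$ is a prime power.

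To finish, identify $K$ with the additive group of $\mathbb{F}_{q^e}$. Since $|C|=p=q^e-1$ equals the number of nonidentity elements of $K$, on which $C$ acts freely, $C$ in fact acts regularly there; in particular $K$ is an irreducible $\mathbb{F}_q[C]$-module, so $\operatorname{End}_{\mathbb{F}_q[C]}(K)\cong\mathbb{F}_{q^e}$, and $C$---which commutes with this field of scalars---is contained in $\mathbb{F}_{q^e}^\times$ and equals it by order. Hence $G\cong\mathbb{F}_{q^e}\rtimes\mathbb{F}_{q^e}^\times=\AGL{1}{p+1}$. Finally, $p$ being an odd prime makes $p+1$ even, and an even prime power is a power of $2$; thus $p=2^e-1$ is a Mersenne prime.

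I expect the exclusion of $n_p=1$ to be the main obstacle, since this is where a genuinely projective---as opposed to linear---representation must be exploited: one has to account for the Schur multiplier of $\mathbb{Z}/p$, invoke Clifford theory in the twisted setting, and convert the statement that $P$ acts by scalars into a failure of projective faithfulness. Once $G$ is known to be a Frobenius group the remainder is self-contained, the one delicate point being the arithmetic squeeze between $q^e\le p+1$ and $q^e\equiv1\pmod p$, which simultaneously fixes $|K|$ and forces $K$ to be elementary abelian.
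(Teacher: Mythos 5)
Your overall strategy mirrors the paper's (rule out a normal Sylow $p$-subgroup by exploiting the degree-$p$ projective representation, then use the Frobenius structure to force $K$ to be an elementary abelian $2$-group and identify $G$ with $\AGL{1}{p+1}$), and everything from ``$n_p=p+1$'' onward is correct, though heavier than necessary: instead of Thompson's nilpotency theorem one can simply note that the order-$p$ complement acts fixed-point-freely, hence transitively, on the $p$ nontrivial elements of $K$, so these are all conjugate, of common prime order, and $K$ is elementary abelian at once (this is the paper's route); your endomorphism-ring identification of $C$ with $\mathbb{F}_{q^e}^\times$ is an equivalent rephrasing of the paper's Singer-cycle argument.

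The genuine gap is precisely at the step you single out as the crux, the exclusion of $n_p=1$. After trivializing the cocycle on $P$, Clifford theory does give that the isotypic components of $\rho|_P$ are permuted transitively by $G$, but this permutation action is \emph{not} the conjugation action of $G$ on $\widehat P$: for $g\in C_G(P)$ one has $\rho(g)\rho(x)\rho(g)^{-1}=c(g,x)\rho(x)$ for all $x\in P$, where $c(g,\cdot)$ is a character of $P$, so conjugation by $\rho(g)$ shifts the $\chi$-component to the $c(g,\cdot)^{-1}\chi$-component; elements of $C_G(P)$ can therefore move the characters. The Weyl--Heisenberg representation of $(\mathbb{Z}/p)\times(\mathbb{Z}/p)$ shows this really happens: there $C_G(P)=G$ acts trivially on $\widehat P$ by conjugation, yet the $p$ characters occurring in $\rho|_P$ form a single orbit of size $p$ under the twisted action, so the inference ``$|\mathcal O|$ divides $|G/C_G(P)|\le 2$'' is invalid as stated and would wrongly exclude that faithful projective irrep of degree $p$. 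The step can be repaired in your setting, but only with an extra coprimality argument: $g\mapsto c(g,\cdot)$ is a homomorphism $C_G(P)\to\widehat P\cong\mathbb{Z}/p$ which kills $P$ (the cocycle is trivial there), hence kills all of $C_G(P)$ because $C_G(P)/P$ has order dividing $p+1$, coprime to $p$; only then does the twisted action factor through $G/C_G(P)$ and your dichotomy $|\mathcal O|\in\{1,2\}$ follow. (The paper sidesteps the issue: when $C_G(P)=G$ it uses Schur's lemma together with an order argument to make $\varphi(P)$ scalar, and when $C_G(P)$ has index $2$ it uses the degree bound $\sqrt{p(p+1)/2}<p$ on the irreducible restriction.) As written, the missing justification sits exactly where the projective, as opposed to linear, nature of the representation matters, so it is a real hole, albeit a repairable one.
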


\begin{remark}
	A \emph{Mersenne prime} $p$ is a prime of the form $2^n-1$ with $n$ a positive integer; note that $n$ is necessarily also a prime \cite{CaldH09book}.
\end{remark}
Before proving this theorem, we need to introduce a technical lemma.
\begin{lemma}\label{lem:SylowNum}
Let  $G$ be a group of order $p(p+1)$ with $p$ a prime. Suppose $G$ has an  irreducible projective representation of degree $p$. Then $G$ has $p+1$ Sylow $p$-subgroups.
\end{lemma}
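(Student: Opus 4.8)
The plan is to determine $n_p$, the number of Sylow $p$-subgroups of $G$, from Sylow's theorems together with the representation-theoretic hypothesis. Since $|G|=p(p+1)$ and $p\nmid p+1$, every Sylow $p$-subgroup of $G$ has order $p$; fix one and call it $P$, so $P$ is cyclic of prime order. Sylow's theorems give $n_p\mid p+1$ and $n_p\equiv 1\pmod p$. Any proper divisor $d>1$ of $p+1$ satisfies $d\le(p+1)/2<p$ and so is not congruent to $1$ modulo $p$; hence $n_p\in\{1,p+1\}$, and the whole statement reduces to excluding the case $n_p=1$, i.e.\ the case in which $P\trianglelefteq G$.

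So assume $P\trianglelefteq G$, and let $\rho$ be the given irreducible projective representation of $G$ of degree $p$. The key claim I would establish is that $P$ must then act by scalars under $\rho$. To see this, restrict $\rho$ to $P$: as $P$ is cyclic its Schur multiplier is trivial, so $\rho|_P$ is projectively equivalent to an ordinary representation and therefore decomposes into one-dimensional constituents. By Clifford theory for projective representations, these constituents lie in a single $G$-orbit, of size $t$ say, all occurring with one common multiplicity $e$; thus $p=et$. The inertia group $I_G(\theta)$ of a constituent $\theta$ contains $P$, so $t=[G:I_G(\theta)]$ divides $[G:P]=p+1$. Now $t$ divides both $p$ and $p+1$, whence $t=1$, so $\rho|_P$ is a scalar multiple of the identity (after a suitable basis change). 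It follows that $\rho(g)$ is determined up to a scalar by the coset $gP$, so $\rho$ descends to a projective representation $\bar\rho$ of $G/P$ of degree $p$, which is irreducible because $\rho$ is. But an irreducible projective representation of a finite group $H$ has degree at most $\sqrt{|H|}$ (its twisted group algebra is semisimple of dimension $|H|$, so the squares of the degrees sum to $|H|$); with $H=G/P$ of order $p+1$ this would force $p\le\sqrt{p+1}$, impossible for any prime $p$. This contradiction rules out $n_p=1$, so $n_p=p+1$.

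I expect the Sylow count and the $\sqrt{|H|}$ bound to be routine; the substantive step is the Clifford-theoretic reduction forcing a normal Sylow $p$-subgroup to act by scalars. The point that needs care is that, unlike for ordinary representations, a normal (even central) subgroup need not act by scalars in a projective representation, so one cannot skip this argument: it is exactly the coprimality $\gcd(p,p+1)=1$ that collapses the Clifford orbit and produces the scalar action. A secondary, purely technical nuisance is tracking the $2$-cocycles through the restriction to $P$ and the passage to $G/P$; the tidiest remedy is to work throughout with ordinary representations of a representation group (Schur cover) of $G$, though the cocycle calculation can also be carried out directly.
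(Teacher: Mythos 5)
Your proof is correct, but the core step differs from the paper's. Both arguments reduce, via the same Sylow counting, to excluding a normal Sylow $p$-subgroup $P$, and both finish the same way: once $P$ acts by scalars, the representation descends to an irreducible projective representation of $G/P$ of degree $p$, contradicting the bound $\deg\le\sqrt{|G/P|}=\sqrt{p+1}$. Where you diverge is in how scalar action is forced. The paper does not apply Clifford theory to $P$ itself; it passes to the centralizer $C$ of $P$, notes that $G/C$ embeds in $\Aut(P)$, which is cyclic of order $p-1$, so that $[G:C]\in\{1,2\}$, and then splits into two cases: if $C=G$ it gets scalars from Schur's lemma (with a short argument killing the phase factors using that elements outside $P$ have order coprime to $p$), while if $[G:C]=2$ it argues that the restriction of $\varphi$ to $C$ must stay irreducible because $p$ is odd, contradicting the bound $\sqrt{p(p+1)/2}<p$; the case $p=2$ is handled separately by classifying groups of order $6$. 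You instead invoke Clifford theory (for projective representations, or equivalently for a Schur cover) directly on the normal subgroup $P$: the constituents of $\rho|_P$ are one-dimensional since $P$ is cyclic, the orbit size $t$ divides both $p$ and $[G:P]=p+1$, so $t=1$ and $P$ acts by scalars without any centrality assumption. Your route buys uniformity (no case split on $[G:C]$, no separate $p=2$ treatment, and it isolates $\gcd(p,p+1)=1$ as the real mechanism) at the cost of relying on the full Clifford orbit theorem in the twisted setting, whereas the paper's route uses only more elementary ingredients (the automorphism group of a cyclic group, Schur's lemma, and the two-component splitting over an index-2 normal subgroup). As you note, the cocycle bookkeeping in your version is most cleanly handled by lifting to a representation group, where the preimage of $P$ is abelian (being a central extension of a cyclic group) and ordinary Clifford theory applies verbatim.
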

\begin{proof}
Recall that any group of order 6 is either cyclic or isomorphic to the symmetric group on three letters. Since any irreducible projective representation of a cyclic group has degree 1, it follows that $G$  is  isomorphic to the symmetric group on three letters when $p=2$,  from which the lemma follows immediately.

 When $p$ is an odd prime, suppose  on the contrary that $G$ does not have $p+1$ Sylow $p$-subgroups. Then it has only one Sylow  $p$-subgroup according to Sylow's theorem~\cite{KurzS04book}. Let $P$ be the Sylow $p$-subgroup of $G$ and $C$ its centralizer, then both $P$ and $C$ are  normal in $G$.
 In addition, the quotient group $G/C$ can be identified with a subgroup of the automorphism group of $P$. Observing that the automorphism group of $P$ is cyclic of order $p-1$~\cite{KurzS04book}, whose greatest common divisor with $p(p+1)$ is 2, we conclude that $C$ has index either 1 or 2 in $G$.
	
Let   $\varphi$ be an irreducible projective representation of $G$ of degree $p$ and $h$  a generator of $P$. If $C=G$, then $h$ commutes with all elements in $G$, so that  $\varphi(h)$ commutes with all elements in $\varphi(G)$ up to phase factors. Observing that any element of $G$ not in $P$ has order coprime with $p$, we conclude that $\varphi(h)$ commutes with all elements in $\varphi(G)$. Therefore, $\varphi(h)$ is proportional to the identity according to Schur's lemma. Consequently, the representation $\varphi$ induces an irreducible projective  representation of $G/P$ of degree $p$. This is impossible, however, because the degree of any irreducible projective representation of a finite group is bounded from above by the square root of the order of the group, note that the image of the representation necessarily spans the operator space (cf.  Corollary 2.30 in \rcite{Isaa76}). Alternatively, we may reach contradiction by observing that any irreducible projective  representation of $G$ of degree $p$ must be faithful.

If $C$ is a subgroup of $G$ of index 2, then $C$ is normal in $G$. Therefore, the restriction of $\varphi$ on $C$ has at most two irreducible components, and all the components have the same degree. Given that $p$ is an odd prime, the restriction must be irreducible. However,
any irreducible projective representation of a group of order $p(p+1)/2$ has  degree at most $\sqrt{p(p+1)/2}<p$. This contradiction confirms the claim that $G$ has $p+1$ Sylow $p$-subgroups.
\end{proof}

Another stepping stone for proving \thref{thm:MGFS} is the famous Frobenius theorem \cite{Feit67book, KurzS04book}\footnote{One reason the Frobenius theorem is famous is because no purely group theoretic proof has been found for this purely group theoretic result despite numerous attempts of many researchers for more than a century.
All known proofs  rely on some representation theory.}.
A nontrivial proper subgroup $H$  of the group $G$ is a \emph{Frobenius complement} of $G$ if $H\cap H^g=1$ for all $g\in G\setminus H$, where $H^g=g^{-1} H g$ and 1 denotes the identity of $G$ as well as the group composed of the identity. The group $G$ is called a \emph{Frobenius group} if it possesses such a complement.  Frobenius theorem says that the \emph{Frobenius kernel}
\begin{equation}
K:=(G\setminus \cup_{g\in G}H^g)\cup{1}
\end{equation}
is a normal subgroup of $G$. In addition,  the Frobenius complements of a Frobenius group are unique up to conjugation, so the Frobenius kernel is unique.

\begin{proof}[Proof of \thref{thm:MGFS}]
	According to \lref{lem:SylowNum} and Sylow's theorem, $G$ has  $p+1$
     Sylow $p$-subgroups, all of which are conjugate to each other.  Let $P$ be a Sylow $p$-subgroup of $G$; then  the normalizer of $P$ is itself, so  the intersection $P\cap P^g$ is trivial for all $g\in G\setminus P$. Therefore, $G$ is a Frobenius group with Frobenius complement $P$. According to the Frobenius theorem \cite{Feit67book, KurzS04book}, the Frobenius kernel $K$ of $G$ is a normal subgroup of order $p+1$.

     Note that any nontrivial element of $K$ cannot commute with any nontrivial element of $P$. It follows that all nontrivial elements of $K$ are conjugate to each other under $G$ and thus have the same order, which is necessarily a prime. Consequently,  $K$ has order  a prime power and thus has a nontrivial center (cf. Sec.~3.1.11 in \rcite{KurzS04book}), which in turn implies that
     $K$ is an elementary Abelian group. Suppose $K$ has order $r^n=p+1$, where  $r$ is a prime and $n$  a positive integer. Then  $n>1$ and $r=2$ since, otherwise, $r^n-1$ cannot be an odd prime. Therefore, $p$ is  a prime of the form $2^n-1$, that is, a  Mersenne prime.
	
	According to the above analysis, $G$ is a semidirect product of $P$ and $K$, where $K$ can be identified as an $n$-dimensional vector space over the Galois field $\bbF_2$. So $P$ can be identified as a subgroup of the general linear group $\GL{n}{2}$ of order $p=2^n-1$, that is,  a Singer cyclic subgroup \cite{Sing38, Bere00, Zhu15Sh, Zhu15P}.  Observing that all Singer cyclic subgroups of  $\GL{n}{2}$ are conjugate to each other and that the affine general linear group $\AGL{1}{p+1}$ (taken as a subgroup of $\AGL{n}{2}$) is a semidirect product of a Singer cyclic subgroup and $K$, we conclude that $G$ is isomorphic to $\AGL{1}{p+1}$.
\end{proof}

\begin{lemma}\label{lem:AffinePermu}
	Let  $K$ be the Sylow $r$-subgroup of $\AGL{1}{r^n}$, where $r$ is a prime and $n$  a positive integer. Let  $\varphi$ be a faithful irreducible projective representation of  $\AGL{1}{r^n}$ of degree $r^n-1$.  Then all operators in $\varphi(K)$ commute with each other and all   common eigenspaces of $\varphi(K)$ are nondegenerate. In addition, $|\tr(\varphi(g))|^2=1$ for all $g\in K^*$, where $K^*$ is the set of  elements in $K$ other than the identity. Any  order-($r^n-1$) element  of $\AGL{1}{r^n}$ induces  a cyclic permutation over  the basis states in the common eigenbasis of $\varphi(K)$\footnote{According to the answer of Derek Holt to a question of the author posted on MathOverflow, the Schur multiplier of the group $\AGL{1}{r^n}$ is trivial except when $r^n=4$ (http://mathoverflow.net/questions/191885/what-is-the-schur-multiplier-of-the-affine-linear-group-agln-q). Therefore, any  irreducible projective representation of $\AGL{1}{r^n}$ with $r^n\neq 4$ can be turned into an ordinary representation, although this fact is not necessary in the proof of \lref{lem:AffinePermu}.}.
\end{lemma}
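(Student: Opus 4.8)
The plan is to examine $\varphi$ on the Sylow $r$-subgroup $K$ of $G:=\AGL{1}{r^n}$, which is also the Frobenius kernel: $K$ is normal and elementary Abelian of order $r^n$, and $G/K$ is cyclic of order $d:=r^n-1$. Identifying $K$ with the additive group of $\bbF_{r^n}$, the quotient $G/K\cong\bbF_{r^n}^{*}$ acts on $K$ by field multiplication, so the only $G/K$-invariant subgroups of $K$ are $1$ and $K$. Passing to a finite central extension of $G$ through which $\varphi$ lifts to an ordinary representation changes none of the assertions---all being phase-independent---so I assume throughout that $\varphi$ is an ordinary representation.

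First I would show the operators in $\varphi(K)$ commute. For $a,b\in K$ write $\varphi(a)\varphi(b)=\lambda(a,b)\,\varphi(b)\varphi(a)$; as $K$ has exponent $r$, each $\lambda(a,b)$ is an $r$-th root of unity and $\lambda$ is an alternating bicharacter on $K$. Conjugating this relation by $\varphi(c)$ for $c\in G$, and using that conjugation permutes $K$ by the field action of $G/K$, shows $\lambda$ is invariant under that action; hence its radical is a $G/K$-invariant subgroup of $K$ and so equals $1$ or $K$. If the radical is $K$, then $\lambda\equiv1$ and $\varphi(K)$ is commutative. If the radical is trivial, then $\lambda$ is a nondegenerate symplectic form on $K$ (in particular $n$ is even), so every irreducible constituent of $\varphi|_K$ has degree $\sqrt{|K|}=r^{n/2}$; then $r^{n/2}$ divides $\deg\varphi=r^n-1$, contradicting $\gcd(r,r^n-1)=1$. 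So $\lambda\equiv1$, and the unitaries in $\varphi(K)$ are simultaneously diagonalizable.

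Next I would decompose the representation space into the common eigenspaces $\bigoplus_\chi V_\chi$ of $\varphi(K)$, indexed by characters $\chi$ of $K$. Since $\varphi(c)$ carries $V_\chi$ isomorphically onto $V_{\chi^{c}}$ for $c\in G$, and $G/K$ permutes the nontrivial characters of $K$ simply transitively (it is the action of $\bbF_{r^n}^{*}$ on $\bbF_{r^n}^{*}$), all the $V_\chi$ with $\chi\neq1$ share a common dimension $e$ and there are exactly $d$ of them. If $e=0$ then $\varphi(K)$ is scalar, contradicting faithfulness; hence $e\geq1$, and $e\,d+\dim V_1=\deg\varphi=d$ forces $e=1$ and $V_1=0$. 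So the common eigenspaces of $\varphi(K)$ are one-dimensional and form an orthonormal eigenbasis labelled by the $d$ nontrivial characters. For $g\in K^{*}$ the operator $\varphi(g)$ is diagonal in this basis with eigenvalues $\chi(g)$ (up to an overall phase), so $|\tr\varphi(g)|=\bigl|\sum_{\chi\neq1}\chi(g)\bigr|=1$, because $\sum_{\text{all }\chi}\chi(g)=0$ for $g\neq1$ and hence $\sum_{\chi\neq1}\chi(g)=-1$; thus $|\tr\varphi(g)|^{2}=1$. Finally, any element of order $d$ generates a cyclic subgroup meeting $K$ trivially, hence a Frobenius complement; up to conjugation (which affects no assertion) it acts on $K$ as multiplication by a primitive element, so $\varphi$ of it permutes the lines $V_\chi$ according to the action of a primitive element of $\bbF_{r^n}^{*}$ on the $d$ nontrivial characters of $K$---a single $d$-cycle. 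Hence that element induces a cyclic permutation of the basis states.

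I expect the crux to be the dichotomy in the second paragraph: excluding a nontrivial commutator cocycle on $K$ without invoking (near-)triviality of the Schur multiplier of $\AGL{1}{r^n}$. The device is to combine the $G/K$-irreducibility of the module $K$---which confines the radical of $\lambda$ to $1$ or $K$---with the fact that irreducible projective representations of an elementary Abelian $r$-group have $r$-power degree, so that the coprimality of $r$ with $r^n-1$ closes the nondegenerate case and, in particular, disposes uniformly of the exceptional dimension $r^n=4$. The remaining ingredients---the simple transitivity of the character action, the dimension count, and the reduction to an ordinary representation---are routine.
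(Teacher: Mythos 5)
Your proposal reaches the right conclusions, and parts of it are genuinely different from the paper's route (your commutator-bicharacter argument for commutativity, versus the paper's one-line appeal to the fact that irreducible projective representations of an elementary Abelian $r$-group have $r$-power degree, which by Clifford theory must divide $r^n-1$). But there is one step whose justification, as written, has a gap: in your third paragraph you index the common eigenspaces of $\varphi(K)$ by characters $\chi$ of $K$, assert $\varphi(c)V_\chi=V_{\chi^{c}}$ for the \emph{standard} conjugation action, and single out a distinguished ``trivial'' eigenspace $V_1$. For a projective representation this is not automatic, and your opening reduction (``assume $\varphi$ is ordinary'') does not repair it: lifting to a central extension replaces $K$ by its preimage $\tilde K$, whose characters with the prescribed central character form only a torsor under $\hat K$, with no canonical trivial element and no a priori identification of the conjugation action with the action of $\bbF_{r^n}^{*}$ on $\hat K\setminus\{1\}$. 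Equivalently, if you stay with $\varphi$ on $G$ and use $\lambda\equiv1$ to choose phases making $\varphi|_K$ an honest representation, then $\varphi(c)$ sends $V_\chi$ to $V_{\nu_c\chi^{c}}$, where $\nu_c\in\hat K$ is a twist determined by the $2$-cocycle of $\varphi$ on $G$; the induced permutation of characters is an affine-type action in which each nontrivial $c$ fixes exactly one character, not obviously the simply transitive action on the nontrivial characters that your dimension count, your evaluation $\sum_{\chi\neq1}\chi(g)=-1$, and your primitive-element description of the $d$-cycle all rely on. This is precisely the point where projectivity bites, and it is what the footnote's Schur-multiplier remark would trivialize---a fact the paper deliberately avoids using.

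The gap is fixable, so the defect is one of justification rather than truth: the twist $c\mapsto\nu_c$ is a $1$-cocycle of $G/K$ with values in $\hat K$, and since $\gcd(|G/K|,|\hat K|)=\gcd(r^n-1,r^n)=1$ one has $H^1(G/K,\hat K)=0$, so the twist can be removed by a further phase choice on $K$; alternatively, a nontrivial stabilizer of an appearing character would force a single appearing character, hence scalar $\varphi(K)$, contradicting faithfulness. The paper sidesteps the labeling issue altogether: it gets multiplicity one from the double counting $r^n-1\leq \frac{1}{r^n}\sum_{g\in K}|\tr(\varphi(g))|^2\leq \frac{1}{r^n}\sum_{g\in G}|\tr(\varphi(g))|^2=r^n-1$, obtains $|\tr(\varphi(g))|^2=1$ on $K^*$ from the resulting equality together with the conjugacy of all elements of $K^*$ in $G$, and gets the cyclic permutation from invariance of the eigenbasis ($K$ normal), transitivity (irreducibility), and the fact that $G$ is generated by $K$ and any order-$(r^n-1)$ element. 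If you either import that counting argument or add the $H^1$/stabilizer patch, your more explicit version goes through.
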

\begin{proof}
Since $K$ is an elementary Abelian $r$-group, every irreducible projective representation has degree a power of $r$. Observing that $K$ is a normal subgroup of $G=\AGL{1}{r^n}$, we conclude that all the irreducible components of the restriction of $\varphi$ on $K$ have the same degree, which is a divisor of $r^n-1$. Given that  $r^n-1$ is not divisible by $r$, it follows that all the components have degree 1. The sum of squared multiplicities of inequivalent irreducible components of $\varphi(K)$ satisfy
\begin{equation}
	r^n-1\leq \frac{1}{r^n} \sum_{g\in K} |\tr(\varphi(g))|^2\leq \frac{1}{r^n}\sum_{g\in G} |\tr(\varphi(g))|^2=r^n-1.
\end{equation}
Therefore,
\begin{equation}
	 \frac{1}{r^n}\sum_{g\in K} |\tr(\varphi(g))|^2=r^n-1, \quad  \sum_{g\in K^*} |\tr(\varphi(g))|^2=r^n-1,
\end{equation}
which implies that all the irreducible components have the same multiplicity 1, so that  the common eigenbasis of $\varphi(K)$ is well defined. In addition,
$|\tr(\varphi(g))|^2=1$ given that all elements of $K^*$ are conjugate to each other in $G$.

The common eigenbasis of $\varphi(K)$ is invariant under the action of $\varphi(G)$  because $K$ is normal in $G$. On the other hand, $G$ acts transitively on the basis states since the representation is irreducible. Observing that $G$ is generated by $K$ and any order-($r^n-1$) element,  we conclude that any  order-($r^n-1$) element induces  a cyclic permutation over  the basis states in the common eigenbasis of $\varphi(K)$.
\end{proof}

\begin{lemma}\label{lem:MUBcov2n1}
        No MUB in dimension $2^n-1$ with $n\geq2$ is sharply covariant with respect to any collineation group that is isomorphic to
        	 $\AGL{1}{2^n}$.
\end{lemma}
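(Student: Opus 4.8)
The plan is to derive a contradiction from the special arithmetic attached to the translation subgroup of $\AGL{1}{2^n}$. Suppose, for contradiction, that some MUB $\mathcal{M}$ in dimension $d=2^n-1$ (with $n\ge 2$) were sharply covariant with respect to a collineation group $\overline{G}\cong G:=\AGL{1}{2^n}$. Since $|\overline{G}|=d(d+1)$ equals the number of basis states of $\mathcal{M}$ and these states form a single $\overline{G}$-orbit, $\overline{G}$ acts regularly on them; by \lref{lem:ICirr} the associated degree-$d$ projective representation $\varphi$ of $G$ is irreducible, and it is faithful because $\overline{G}\cong G$. Moreover $\varphi$ may be taken to be an ordinary unitary representation: this is immediate from the footnote to \lref{lem:AffinePermu} when $2^n\ne 4$, and when $2^n=4$ (so $G\cong A_4$) it holds because the genuine projective irreducible representations of $A_4$ all have degree $2$, so an irreducible projective representation of degree $d=3$ is already ordinary. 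Write $K$ for the normal Sylow $2$-subgroup (the translations) of $G$, which is elementary abelian of order $2^n=d+1$.

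I would then record two structural facts. \emph{First, $K$ acts regularly on the $d+1$ bases of $\mathcal{M}$.} An element $k\in K$ that fixes a basis $B$ permutes the $d$ states of $B$, and does so freely since $\overline{G}$ acts freely on all states; hence the order of $k$ divides both $d$ and $|K|=d+1$, so $k=1$. Thus $K$ acts freely on the $d+1$ bases, and since $|K|=d+1$ this action is also transitive. \emph{Second, the common eigenbasis of $\varphi(K)$ realizes exactly the $d$ nontrivial characters of $K$.} By \lref{lem:AffinePermu}, $\varphi(K)$ is a commuting family of unitaries with a nondegenerate common eigenbasis $\ket{e_1},\dots,\ket{e_d}$, so $\varphi|_K$ decomposes into $d$ \emph{distinct} characters of $K$; since $\varphi$ is faithful and $K$ is normal there is no $K$-invariant vector, so the one character missing (among the $d+1$ characters of $K$) is the trivial one, and $\varphi|_K=\chi_1\oplus\cdots\oplus\chi_d$ with $\chi_1,\dots,\chi_d$ the nontrivial characters, each occurring once. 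Being characters of an elementary abelian $2$-group, each $\chi_i$ takes only the values $\pm1$.

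Now fix the fiducial state $\ket{\psi}$, lying in some basis $B$, and let $k\in K\setminus\{1\}$. By the first fact, $\varphi(k)\ket{\psi}$ lies in a basis other than $B$, so mutual unbiasedness gives $|\bra{\psi}\varphi(k)\ket{\psi}|^2=1/d$. By the second fact, on the other hand, $\bra{\psi}\varphi(k)\ket{\psi}=\sum_{i=1}^{d}\bigl|\inner{e_i}{\psi}\bigr|^2\,\chi_i(k)$, which is a \emph{real} number since every $\chi_i(k)=\pm1$; hence $\bra{\psi}\varphi(k)\ket{\psi}=\epsilon_k/\sqrt{d}$ with $\epsilon_k\in\{+1,-1\}$. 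Summing over the whole group $K$ and using $\sum_{k\in K}\chi_i(k)=0$ for each nontrivial $\chi_i$, we get $\sum_{k\in K}\bra{\psi}\varphi(k)\ket{\psi}=0$; isolating the identity term $\bra{\psi}\varphi(1)\ket{\psi}=1$ then yields
\[
1+\frac{1}{\sqrt{d}}\sum_{k\in K\setminus\{1\}}\epsilon_k=0,\qquad\text{that is}\qquad \sum_{k\in K\setminus\{1\}}\epsilon_k=-\sqrt{d}.
\]
The left-hand side is an integer, whereas $d=2^n-1\equiv 3\pmod 4$ for $n\ge 2$, so $d$ is not a perfect square and $\sqrt{d}$ is irrational. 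This contradiction would establish the lemma.

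The calculation in the first structural fact and the congruence at the end are routine. The two substantive points are: (a) regularity of $K$ on the \emph{bases} — forced by the coprimality of $|K|=d+1$ with $d$ — which places $\varphi(k)\ket{\psi}$ in a different basis from $\ket{\psi}$ for every $k\in K\setminus\{1\}$; and (b) the fact that the characters $\chi_i$ of $K$ are $\{\pm1\}$-valued, a feature special to the prime $2$, so that the amplitudes $\bra{\psi}\varphi(k)\ket{\psi}$ are \emph{real} numbers of modulus $1/\sqrt{d}$. Once these are in place, the identity $\sum_{k\in K}\varphi(k)=0$ (valid since no trivial character of $K$ appears) delivers the numerical contradiction at once. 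The hard part will mostly be the bookkeeping: justifying that $\varphi$ may be taken ordinary, in particular handling the exceptional case $2^n=4$, and verifying carefully that a free action on states forces a free action on the bases.
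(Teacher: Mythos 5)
Your proposal is correct and takes essentially the same route as the paper: restrict to the normal elementary abelian subgroup $K$, note that the fiducial overlaps $\bra{\psi}\varphi(k)\ket{\psi}$ for $k\in K\setminus\{1\}$ are real of modulus $1/\sqrt{2^n-1}$ because $k$ cannot fix the fiducial's basis, and sum over $K$ to equate an integer with $\pm\sqrt{2^n-1}$, which is irrational. The differences are only bookkeeping: the paper stays with the projective representation and gets $\sum_{g\in K^*}\varphi(g)=-I$ from Schur's lemma and the trace value, while you first pass to an ordinary representation (treating $2^n=4$ separately) and use Clifford theory plus character orthogonality; both versions are sound.
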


\begin{proof}
	Suppose on the contrary that there exists an MUB in  dimension $2^n-1$ that is sharply covariant with respect to a collineation group that is isomorphic to  $G:=\AGL{1}{2^n}$. Then this collineation group is necessarily irreducible according to \lref{lem:ICirr} and thus defines  a faithful irreducible projective representation $\varphi$ of $G$ of degree $2^n-1$.  Let $K$ be the Sylow $2$-subgroup of $G$, then $K$ is an elementary Abelian $2$-group of order $2^n$. According to \lref{lem:AffinePermu}, we can choose a suitable basis such that  all operators in $\varphi(K)$ are diagonalized simultaneously. In addition, $|\tr(\varphi(g))|^2=1$ for $g\in K^*$, so $\varphi(g)$  have $2^{n-1}-1$ eigenvalues equal to 1 and $2^{n-1}$ eigenvalues equal to $-1$ after choosing suitable phase factors.

Suppose $|\psi\rangle$ is a fiducial ket of the MUB with respect to the common eigenbasis of $\varphi(K)$. Then $|\langle \psi| \varphi(g)|\psi\rangle|^2=1/(2^n-1)$ for each $g\in K^*$ since $g$ cannot stabilize any basis. Given that $\varphi(g)$ is real diagonal, it follows that
\begin{equation}\label{eq:MUBcon}
\langle \psi| \varphi(g)|\psi\rangle=\pm \frac{1}{\sqrt{2^n-1}}.
\end{equation}
Observing that $\sum_{g\in K^*} \varphi(g)$ commutes with all operators in $\varphi(G)$, we conclude that $\sum_{g\in K^*} \varphi(g)$ is proportional to the identity and is thus equal to $-1$ given that $\sum_{g\in K^*} \tr(\varphi(g))=-(2^n-1)$.
	Summing over $g\in K^*$ in \eref{eq:MUBcon} yields $-1=m/\sqrt{2^n-1}$, where  $-(2^n-1)\leq m\leq 2^n-1$ is an integer. However, this equality can never hold for $n\geq2$ because the left hand side is rational, while the right hand side is irrational, note that  $2^n-1$ is not a perfect square unless $n=1$. This contradiction completes the proof of the lemma.
\end{proof}

\begin{proof}[Proof of \thref{thm:MUBcov}]
	Suppose on the contrary that there exists an MUB  in odd prime dimension $p$ that is sharply covariant with respect to the collineation group $\overline{G}$. Then $\overline{G}$  has order $p(p+1)$ and is irreducible according to \lref{lem:ICirr}; in other words, it defines a faithful irreducible projective representation of a group of order $p(p+1)$. According to \thref{thm:MGFS}, $p$ is a Mersenne prime, that is, $p=2^n-1$ for some integer $n\geq 2$, and $\overline{G}$ is isomorphic to  $\AGL{1}{2^n}$.   However, no MUB can be sharply covariant with respect to such a group $\overline{G}$ according to \lref{lem:MUBcov2n1}. This contradiction confirms the claim of the theorem that no MUB in any odd prime dimension is sharply covariant.
\end{proof}

\Thref{thm:MUBcov} still holds even if antiunitary transformations are taken into account.
\begin{theorem}\label{thm:MUBstabCov2}
	No  MUB in odd prime   dimensions  is sharply covariant with respect to any group composed of unitary or antiunitary transformations.
\end{theorem}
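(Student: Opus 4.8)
\noindent The plan is to reduce \thref{thm:MUBstabCov2} to \thref{thm:MUBcov} by an index-two argument. Suppose, contrary to the claim, that some MUB in an odd prime dimension $p$ is sharply covariant with respect to a group composed of unitary or antiunitary transformations, and let $\overline{G}$ be the associated collineation group. Then $\overline{G}$ has order $p(p+1)$ and, by \lref{lem:ICirr} together with its remark, acts irreducibly on the underlying Hilbert space $V\cong\mathbb{C}^p$. If every element of $\overline{G}$ is the collineation of a unitary operator, then \thref{thm:MUBcov} already supplies a contradiction, so we may assume $\overline{G}$ contains the collineation of an antiunitary operator $A$. The unitary collineations then form a subgroup $H$ of $\overline{G}$ of index two and order $p(p+1)/2$, while the antiunitary ones make up the single coset $HA$.

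The crux is to show that $H$ still acts irreducibly on $V$; this is where the oddness of $p$ enters. Suppose $W\subsetneq V$ is a nonzero $H$-invariant subspace. Since $A$ is antilinear and bijective, $AW$ is again a complex subspace of the same dimension as $W$; and for every $h\in H$ the conjugate $A^{-1}hA$ is again the collineation of a unitary operator and hence lies in $H$, so $h(AW)=A\bigl((A^{-1}hA)W\bigr)\subseteq AW$ and $AW$ is $H$-invariant as well. Because $A^{2}\in H$ maps $W$ into $W$, both $W+AW$ and $W\cap AW$ are carried into themselves by $A$, hence are invariant under all of $\overline{G}=H\cup HA$. Irreducibility of $\overline{G}$ then forces $W\cap AW=\{0\}$ and $W+AW=V$, so $V=W\oplus AW$ and $p=\dim V=2\dim W$, contradicting the hypothesis that $p$ is odd. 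Hence $H$ acts irreducibly on $V$, i.e.\ $H$ possesses a faithful irreducible ordinary ($\mathbb{C}$-linear) projective representation of degree $p$.

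To finish I would invoke the standard fact that the degree of any irreducible projective representation of a finite group is at most the square root of the group order---the images of the group elements span the whole operator algebra $\mathrm{End}(V)\cong M_p(\mathbb{C})$ of dimension $p^{2}$, yet that span has dimension at most $|H|$ (cf.\ Corollary~2.30 in \rcite{Isaa76}). This gives $p^{2}\le|H|=p(p+1)/2$, i.e.\ $p\le1$, which is absurd for an odd prime. This contradiction completes the proof. The only delicate point---the main obstacle---is the irreducibility of $H$: one must treat the antilinearity of $A$ with care, observing that images and intersections of subspaces under an antilinear bijection behave exactly as under a linear one, and that the resulting decomposition $V=W\oplus AW$ cannot occur in odd dimension. (For $p=2$ this parity obstruction evaporates, in agreement with the fact that the qubit MUB remains sharply covariant when antiunitaries are allowed.) Everything else is a mechanical consequence of \thref{thm:MUBcov}, \lref{lem:ICirr}, and the degree bound.
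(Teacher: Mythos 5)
Your proposal is correct and follows essentially the same route as the paper: pass to the index-two unitary subgroup $\overline{H}$ of order $p(p+1)/2$, use the oddness of $p$ to rule out a splitting of $V$ into two equal-dimensional components, and then rule out irreducibility of $\overline{H}$ via the bound that an irreducible collineation group in dimension $p$ must span the $p^2$-dimensional operator space and hence have order at least $p^2$. The only difference is cosmetic: where the paper invokes the standard Clifford-theory dichotomy for a normal index-two subgroup, you verify it by hand with the $W$, $AW$ argument, carefully tracking the antilinearity of $A$.
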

\begin{proof}
	Suppose on the contrary that there is an MUB in odd prime dimension $p$ that is sharply covariant with respect to $\overline{G}$.
	Then $\overline{G}$ has order $p(p+1)$ and is irreducible according to \lref{lem:ICirr}. In addition, $\overline{G}$ must contain some antiunitary transformations according to \thref{thm:MUBcov}.
	Let $\overline{H}$ be the subgroup of $\overline{G}$ that is composed of unitary transformations. Then $\overline{H}$ has index 2 and is normal  in $\overline{G}$.  Consequently,  $\overline{H}$ is either irreducible or has two irreducible components of equal degree. The second  possibility cannot happen given that  the dimension $p$ is an odd prime. The first possibility cannot happen either because any irreducible collineation group in dimension $p$ has order at least $p^2$, while $\overline{H}$ has order $p(p+1)/2$. This contradiction completes the proof of the theorem.
\end{proof}

In summary, we have proved that no MUB in odd prime dimensions is sharply covariant, and the conclusion still holds even if antiunitary transformations are taken into account.
Our study in this paper and \rcite{Zhu15Sh} have accumulated fairly strong evidence that sharply covariant MUB are quite rare. It is plausible that such MUB can  appear only in dimensions 2 and~4.
\begin{conjecture}
	No MUB is sharply covariant except for those in dimensions 2 and 4.
\end{conjecture}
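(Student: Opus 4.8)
The plan is to extend the two–step strategy behind \thref{thm:MUBcov} from odd primes to arbitrary composite dimensions. Any group‑covariant MUB is informationally complete, so \lref{lem:ICirr} already forces the generating group of a sharply covariant MUB in dimension $d$ to be an irreducible collineation group of order $d(d+1)$, hence to carry a faithful irreducible projective representation $\varphi$ of degree $d$. Since $d^{2}\leq d(d+1)<2d^{2}$ for $d\geq 2$, such a group sits at the extreme end of the relation between group order and irreducible degree, and it additionally acts \emph{regularly} on the $d(d+1)$ basis states, inducing a transitive action on the $d+1$ bases with point stabilizer of order $d$. The first step of the program is to classify all groups subject to these constraints; the second step is to rule out a fiducial state for each survivor, exactly as \lref{lem:MUBcov2n1} does for $\AGL{1}{2^n}$.

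For the second step the machinery of the paper carries over once the group is known. Concretely, one would try to show that $\overline{G}$ has a normal elementary‑Abelian subgroup $K$ of order $d+1$ — the analogue of the Frobenius kernel produced in the proof of \thref{thm:MGFS} — which forces $d+1=r^{m}$ to be a prime power and suggests $\overline{G}\cong\AGL{1}{r^{m}}$. Then \lref{lem:AffinePermu} applies verbatim: $\varphi(K)$ is simultaneously diagonalizable with nondegenerate joint eigenspaces, $|\tr(\varphi(g))|^{2}=1$ for all $g\in K^{*}$, and $\sum_{g\in K^{*}}\varphi(g)=-\mathbf{I}$. Writing a hypothetical fiducial as $|\psi\rangle=\sum_{j}\sqrt{p_{j}}\,\rme^{\rmi\theta_{j}}|j\rangle$ in the common eigenbasis of $\varphi(K)$, no $g\in K^{*}$ can stabilize the basis containing $|\psi\rangle$, so $|\langle\psi|\varphi(g)|\psi\rangle|^{2}=1/d$ for every $g\in K^{*}$; summing over $K^{*}$ gives $\sum_{g\in K^{*}}\langle\psi|\varphi(g)|\psi\rangle=-1$, and a Parseval computation using the flat magnitudes yields $\sum_{j}p_{j}^{2}=2/(d+1)$. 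When $r=2$ these collapse to the rationality contradiction of \lref{lem:MUBcov2n1}; for $r$ odd (dimensions $d=6,8,10,12,\dots$) one instead has to show that the $\mathbb{Z}_{r}$‑Fourier coefficients $\sum_{j}p_{j}\omega^{jk}$ ($\omega=\rme^{2\pi\rmi/r}$) of a nonnegative weight vector cannot simultaneously satisfy the flat‑magnitude conditions and the remaining within‑basis orthogonality and cross‑basis unbiasedness relations — a Gauss‑sum / Diophantine problem that must nevertheless admit the known solutions in dimensions $2$ and $4$.

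The main obstacle will be the classification step in full generality. With $d$ composite one loses the single relevant prime, so the Sylow counting of \lref{lem:SylowNum} and the Frobenius structure of \thref{thm:MGFS} are no longer available, and a priori a nonidentity collineation may well fix two of the $d+1$ bases, so $\overline{G}$ need not be a Frobenius group at all. Proving that $\overline{G}$ still possesses the required normal elementary‑Abelian subgroup — equivalently, reducing to affine‑type groups plus finitely many sporadic exceptions — seems to require either establishing $2$‑transitivity on the bases and invoking the classification of finite $2$‑transitive groups, or a delicate character‑degree analysis exploiting that the irreducible degrees of $\overline{G}$ other than $d$ have squares summing to exactly $d$. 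A secondary difficulty, absent in the Mersenne case, is that the elimination step genuinely has solutions for $d=2$ and $d=4$, so the Diophantine argument has to be sharp enough to survive those examples while excluding all larger dimensions; this sharpness is precisely what is missing at present, which is why the paper proves the statement only in odd prime dimensions and records the general claim as a conjecture.
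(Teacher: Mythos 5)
You have not proved the statement, and neither does the paper: this is recorded explicitly as a \emph{conjecture}, supported only by \thref{thm:MUBcov} (odd prime dimensions) and the earlier result of \rcite{Zhu15Sh} on stabilizer MUB. Your proposal is a research program modeled on the paper's two-step strategy, and you candidly identify, rather than close, the two gaps that keep it from being a proof. First, the classification step: for composite $d$ there is no analogue of \lref{lem:SylowNum} or \thref{thm:MGFS}. The paper's argument hinges on $p$ being prime, so that a Sylow $p$-subgroup is self-normalizing with trivial intersections, making $\overline{G}$ a Frobenius group with kernel of order $p+1$; for composite $d$ a nonidentity collineation may fix several bases, the order $d(d+1)$ has no distinguished prime, and nothing forces a normal elementary Abelian subgroup of order $d+1$ (indeed nothing forces $d+1$ to be a prime power). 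Your suggested substitutes -- $2$-transitivity on the bases plus the classification of $2$-transitive groups, or a character-degree analysis of the twisted group algebra -- are plausible directions but are not carried out, and the character-degree remark needs care since degrees of projective representations attached to a nontrivial cocycle need not divide the group order.

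Second, even granting the reduction to $\overline{G}\cong\AGL{1}{r^m}$ with $d=r^m-1$, the elimination step is only done in the paper for $r=2$, where the eigenvalues of $\varphi(g)$, $g\in K^*$, can be taken real ($\pm1$) so that \eref{eq:MUBcon} forces the rational-versus-irrational contradiction. For odd $r$ the eigenvalues are $r$-th roots of unity, $\langle\psi|\varphi(g)|\psi\rangle$ is no longer constrained to $\pm1/\sqrt{d}$, and your identities $\sum_{g\in K^*}\langle\psi|\varphi(g)|\psi\rangle=-1$ and $\sum_j p_j^2=2/(d+1)$ (which also require checking that the cocycle does not obstruct $\sum_{g\in K^*}\varphi(g)$ being scalar) are consistency conditions, not a contradiction; you explicitly defer the decisive Gauss-sum/Diophantine argument, and it must moreover be sharp enough to admit the genuine sharply covariant examples in dimensions $2$ and $4$. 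So the proposal is a sensible outline of how one might attack the conjecture along the paper's lines, but both of its essential steps remain open, exactly as the paper's own framing acknowledges.
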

Resolution of this conjecture would provide valuable insight about the structure of MUB and the geometry of the quantum state space.
It would also sharpen
the  contrast between two elusive discrete symmetric structures, namely, MUB and SICs.
Recall that all known SICs are sharply covariant, and it is plausible that this conclusion also holds for all SICs yet to be discovered. The deep reason behind this contrast is still not clear. The potential implications of these observations for foundational studies deserve further exploration.  We hope our work will stimulate further interest and progress along this direction.

\section*{Acknowledgments}The author is grateful to Derek Holt for answering a question on MathOverflow concerning the Schur multiplier of the affine general linear group over a finite field.
 This work is supported in part by Perimeter Institute for Theoretical Physics. Research at Perimeter Institute is supported by the Government of Canada through Industry Canada and by the Province of Ontario through the Ministry of Research and Innovation.

\bigskip
\bibliographystyle{apsrev4-1}

\bibliography{all_references}

%merlin.mbs apsrev4-1.bst 2010-07-25 4.21a (PWD, AO, DPC) hacked
%Control: key (0)
%Control: author (72) initials jnrlst
%Control: editor formatted (1) identically to author
%Control: production of article title (-1) disabled
%Control: page (0) single
%Control: year (1) truncated
%Control: production of eprint (0) enabled
\begin{thebibliography}{28}%
\makeatletter
\providecommand \@ifxundefined [1]{%
 \@ifx{#1\undefined}
}%
\providecommand \@ifnum [1]{%
 \ifnum #1\expandafter \@firstoftwo
 \else \expandafter \@secondoftwo
 \fi
}%
\providecommand \@ifx [1]{%
 \ifx #1\expandafter \@firstoftwo
 \else \expandafter \@secondoftwo
 \fi
}%
\providecommand \natexlab [1]{#1}%
\providecommand \enquote  [1]{``#1''}%
\providecommand \bibnamefont  [1]{#1}%
\providecommand \bibfnamefont [1]{#1}%
\providecommand \citenamefont [1]{#1}%
\providecommand \href@noop [0]{\@secondoftwo}%
\providecommand \href [0]{\begingroup \@sanitize@url \@href}%
\providecommand \@href[1]{\@@startlink{#1}\@@href}%
\providecommand \@@href[1]{\endgroup#1\@@endlink}%
\providecommand \@sanitize@url [0]{\catcode `\\12\catcode `\$12\catcode
  `\&12\catcode `\#12\catcode `\^12\catcode `\_12\catcode `\%12\relax}%
\providecommand \@@startlink[1]{}%
\providecommand \@@endlink[0]{}%
\providecommand \url  [0]{\begingroup\@sanitize@url \@url }%
\providecommand \@url [1]{\endgroup\@href {#1}{\urlprefix }}%
\providecommand \urlprefix  [0]{URL }%
\providecommand \Eprint [0]{\href }%
\providecommand \doibase [0]{http://dx.doi.org/}%
\providecommand \selectlanguage [0]{\@gobble}%
\providecommand \bibinfo  [0]{\@secondoftwo}%
\providecommand \bibfield  [0]{\@secondoftwo}%
\providecommand \translation [1]{[#1]}%
\providecommand \BibitemOpen [0]{}%
\providecommand \bibitemStop [0]{}%
\providecommand \bibitemNoStop [0]{.\EOS\space}%
\providecommand \EOS [0]{\spacefactor3000\relax}%
\providecommand \BibitemShut  [1]{\csname bibitem#1\endcsname}%
\let\auto@bib@innerbib\@empty
%</preamble>
\bibitem [{\citenamefont {Ivanovi\'c}(1981)}]{Ivan81}%
  \BibitemOpen
  \bibfield  {author} {\bibinfo {author} {\bibfnamefont {I.~D.}\ \bibnamefont
  {Ivanovi\'c}},\ }\href@noop {} {\bibfield  {journal} {\bibinfo  {journal} {J.
  Phys. A: Math. Gen.}\ }\textbf {\bibinfo {volume} {14}},\ \bibinfo {pages}
  {3241} (\bibinfo {year} {1981})}\BibitemShut {NoStop}%
\bibitem [{\citenamefont {Wootters}\ and\ \citenamefont
  {Fields}(1989)}]{WootF89}%
  \BibitemOpen
  \bibfield  {author} {\bibinfo {author} {\bibfnamefont {W.~K.}\ \bibnamefont
  {Wootters}}\ and\ \bibinfo {author} {\bibfnamefont {B.~D.}\ \bibnamefont
  {Fields}},\ }\href@noop {} {\bibfield  {journal} {\bibinfo  {journal} {Ann.
  Phys.}\ }\textbf {\bibinfo {volume} {191}},\ \bibinfo {pages} {363} (\bibinfo
  {year} {1989})}\BibitemShut {NoStop}%
\bibitem [{\citenamefont {Durt}\ \emph {et~al.}(2010)\citenamefont {Durt},
  \citenamefont {Englert}, \citenamefont {Bengtsson},\ and\ \citenamefont
  {{\.{Z}}yczkowski}}]{DurtEBZ10}%
  \BibitemOpen
  \bibfield  {author} {\bibinfo {author} {\bibfnamefont {T.}~\bibnamefont
  {Durt}}, \bibinfo {author} {\bibfnamefont {B.-G.}\ \bibnamefont {Englert}},
  \bibinfo {author} {\bibfnamefont {I.}~\bibnamefont {Bengtsson}}, \ and\
  \bibinfo {author} {\bibfnamefont {K.}~\bibnamefont {{\.{Z}}yczkowski}},\
  }\href@noop {} {\bibfield  {journal} {\bibinfo  {journal} {Int. J. Quant.
  Inf.}\ }\textbf {\bibinfo {volume} {8}},\ \bibinfo {pages} {535} (\bibinfo
  {year} {2010})}\BibitemShut {NoStop}%
\bibitem [{\citenamefont {Gottesman}(1997)}]{Gott97the}%
  \BibitemOpen
  \bibfield  {author} {\bibinfo {author} {\bibfnamefont {D.}~\bibnamefont
  {Gottesman}},\ }\emph {\bibinfo {title} {Stabilizer Codes and Quantum Error
  Correction}},\ \href@noop {} {Ph.D. thesis},\ \bibinfo  {school} {California
  Institute of Technology} (\bibinfo {year} {1997}),\ \bibinfo {note}
  {available at \url{http://arxiv.org/abs/quant-ph/9705052}}\BibitemShut
  {NoStop}%
\bibitem [{\citenamefont {Bandyopadhyay}\ \emph {et~al.}(2002)\citenamefont
  {Bandyopadhyay}, \citenamefont {Boykin}, \citenamefont {Roychowdhury},\ and\
  \citenamefont {Vatan}}]{BandBRV02}%
  \BibitemOpen
  \bibfield  {author} {\bibinfo {author} {\bibfnamefont {S.}~\bibnamefont
  {Bandyopadhyay}}, \bibinfo {author} {\bibfnamefont {P.~O.}\ \bibnamefont
  {Boykin}}, \bibinfo {author} {\bibfnamefont {V.}~\bibnamefont
  {Roychowdhury}}, \ and\ \bibinfo {author} {\bibfnamefont {F.}~\bibnamefont
  {Vatan}},\ }\href@noop {} {\bibfield  {journal} {\bibinfo  {journal}
  {Algorithmica}\ }\textbf {\bibinfo {volume} {34}},\ \bibinfo {pages} {512}
  (\bibinfo {year} {2002})}\BibitemShut {NoStop}%
\bibitem [{\citenamefont {Lawrence}\ \emph {et~al.}(2002)\citenamefont
  {Lawrence}, \citenamefont {Brukner},\ and\ \citenamefont
  {Zeilinger}}]{LawrBZ02}%
  \BibitemOpen
  \bibfield  {author} {\bibinfo {author} {\bibfnamefont {J.}~\bibnamefont
  {Lawrence}}, \bibinfo {author} {\bibfnamefont {{\v{C}}.}~\bibnamefont
  {Brukner}}, \ and\ \bibinfo {author} {\bibfnamefont {A.}~\bibnamefont
  {Zeilinger}},\ }\href@noop {} {\bibfield  {journal} {\bibinfo  {journal}
  {Phys. Rev. A}\ }\textbf {\bibinfo {volume} {65}},\ \bibinfo {pages} {032320}
  (\bibinfo {year} {2002})}\BibitemShut {NoStop}%
\bibitem [{\citenamefont {Godsil}\ and\ \citenamefont {Roy}(2009)}]{GodsR09}%
  \BibitemOpen
  \bibfield  {author} {\bibinfo {author} {\bibfnamefont {C.}~\bibnamefont
  {Godsil}}\ and\ \bibinfo {author} {\bibfnamefont {A.}~\bibnamefont {Roy}},\
  }\href@noop {} {\bibfield  {journal} {\bibinfo  {journal} {Eur. J.
  Combinator.}\ }\textbf {\bibinfo {volume} {30}},\ \bibinfo {pages} {246}
  (\bibinfo {year} {2009})}\BibitemShut {NoStop}%
\bibitem [{\citenamefont {Kantor}(2012)}]{Kant12}%
  \BibitemOpen
  \bibfield  {author} {\bibinfo {author} {\bibfnamefont {W.~M.}\ \bibnamefont
  {Kantor}},\ }\href@noop {} {\bibfield  {journal} {\bibinfo  {journal} {J.
  Math. Phys.}\ }\textbf {\bibinfo {volume} {53}},\ \bibinfo {eid} {032204}
  (\bibinfo {year} {2012})}\BibitemShut {NoStop}%
\bibitem [{\citenamefont {Zhu}(2015{\natexlab{a}})}]{Zhu15Sh}%
  \BibitemOpen
  \bibfield  {author} {\bibinfo {author} {\bibfnamefont {H.}~\bibnamefont
  {Zhu}},\ }\href@noop {} {\enquote {\bibinfo {title} {{Sharply covariant
  mutually unbiased bases}},}\ } (\bibinfo {year} {2015}{\natexlab{a}}),\
  \Eprint {http://arxiv.org/abs/1503.00003} {arXiv:1503.00003} \BibitemShut
  {NoStop}%
\bibitem [{\citenamefont {Zhu}(2015{\natexlab{b}})}]{Zhu15M}%
  \BibitemOpen
  \bibfield  {author} {\bibinfo {author} {\bibfnamefont {H.}~\bibnamefont
  {Zhu}},\ }\href@noop {} {\bibfield  {journal} {\bibinfo  {journal} {Phys.
  Rev. A}\ }\textbf {\bibinfo {volume} {91}},\ \bibinfo {pages} {060301}
  (\bibinfo {year} {2015}{\natexlab{b}})}\BibitemShut {NoStop}%
\bibitem [{\citenamefont {Wootters}(1987)}]{Woot87}%
  \BibitemOpen
  \bibfield  {author} {\bibinfo {author} {\bibfnamefont {W.~K.}\ \bibnamefont
  {Wootters}},\ }\href@noop {} {\bibfield  {journal} {\bibinfo  {journal} {Ann.
  Phys.}\ }\textbf {\bibinfo {volume} {176}},\ \bibinfo {pages} {1} (\bibinfo
  {year} {1987})}\BibitemShut {NoStop}%
\bibitem [{\citenamefont {Gibbons}\ \emph {et~al.}(2004)\citenamefont
  {Gibbons}, \citenamefont {Hoffman},\ and\ \citenamefont
  {Wootters}}]{GibbHW04}%
  \BibitemOpen
  \bibfield  {author} {\bibinfo {author} {\bibfnamefont {K.~S.}\ \bibnamefont
  {Gibbons}}, \bibinfo {author} {\bibfnamefont {M.~J.}\ \bibnamefont
  {Hoffman}}, \ and\ \bibinfo {author} {\bibfnamefont {W.~K.}\ \bibnamefont
  {Wootters}},\ }\href@noop {} {\bibfield  {journal} {\bibinfo  {journal}
  {Phys. Rev. A}\ }\textbf {\bibinfo {volume} {70}},\ \bibinfo {pages} {062101}
  (\bibinfo {year} {2004})}\BibitemShut {NoStop}%
\bibitem [{\citenamefont {Zhu}(2015{\natexlab{c}})}]{Zhu15P}%
  \BibitemOpen
  \bibfield  {author} {\bibinfo {author} {\bibfnamefont {H.}~\bibnamefont
  {Zhu}},\ }\href@noop {} {\enquote {\bibinfo {title} {{Permutation symmetry
  determines the discrete Wigner function}},}\ } (\bibinfo {year}
  {2015}{\natexlab{c}}),\ \Eprint {http://arxiv.org/abs/1504.03773}
  {arXiv:1504.03773} \BibitemShut {NoStop}%
\bibitem [{\citenamefont {Zauner}(2011)}]{Zaun11}%
  \BibitemOpen
  \bibfield  {author} {\bibinfo {author} {\bibfnamefont {G.}~\bibnamefont
  {Zauner}},\ }\href@noop {} {\bibfield  {journal} {\bibinfo  {journal} {Int.
  J. Quant. Inf.}\ }\textbf {\bibinfo {volume} {9}},\ \bibinfo {pages} {445}
  (\bibinfo {year} {2011})}\BibitemShut {NoStop}%
\bibitem [{\citenamefont {Renes}\ \emph {et~al.}(2004)\citenamefont {Renes},
  \citenamefont {Blume-Kohout}, \citenamefont {Scott},\ and\ \citenamefont
  {Caves}}]{ReneBSC04}%
  \BibitemOpen
  \bibfield  {author} {\bibinfo {author} {\bibfnamefont {J.~M.}\ \bibnamefont
  {Renes}}, \bibinfo {author} {\bibfnamefont {R.}~\bibnamefont {Blume-Kohout}},
  \bibinfo {author} {\bibfnamefont {A.~J.}\ \bibnamefont {Scott}}, \ and\
  \bibinfo {author} {\bibfnamefont {C.~M.}\ \bibnamefont {Caves}},\ }\href@noop
  {} {\bibfield  {journal} {\bibinfo  {journal} {J. Math. Phys.}\ }\textbf
  {\bibinfo {volume} {45}},\ \bibinfo {pages} {2171} (\bibinfo {year}
  {2004})},\ \bibinfo {note} {supplementary information including the fiducial
  kets available at \url{http://www.cquic.org/papers/reports/}}\BibitemShut
  {NoStop}%
\bibitem [{\citenamefont {Scott}\ and\ \citenamefont {Grassl}(2010)}]{ScotG10}%
  \BibitemOpen
  \bibfield  {author} {\bibinfo {author} {\bibfnamefont {A.~J.}\ \bibnamefont
  {Scott}}\ and\ \bibinfo {author} {\bibfnamefont {M.}~\bibnamefont {Grassl}},\
  }\href@noop {} {\bibfield  {journal} {\bibinfo  {journal} {J. Math. Phys.}\
  }\textbf {\bibinfo {volume} {51}},\ \bibinfo {pages} {042203} (\bibinfo
  {year} {2010})},\ \bibinfo {note} {supplementary information including the
  fiducial kets available at \url{http://arxiv.org/abs/0910.5784}}\BibitemShut
  {NoStop}%
\bibitem [{\citenamefont {Zhu}(2012)}]{Zhu12the}%
  \BibitemOpen
  \bibfield  {author} {\bibinfo {author} {\bibfnamefont {H.}~\bibnamefont
  {Zhu}},\ }\emph {\bibinfo {title} {Quantum State Estimation and Symmetric
  Informationally Complete {POM}s}},\ \href@noop {} {Ph.D. thesis},\ \bibinfo
  {school} {National University of Singapore} (\bibinfo {year} {2012}),\
  \bibinfo {note} {available at
  \url{http://scholarbank.nus.edu.sg/bitstream/handle/10635/35247/ZhuHJthesis.pdf}}\BibitemShut
  {NoStop}%
\bibitem [{\citenamefont {Appleby}\ \emph {et~al.}(2015)\citenamefont
  {Appleby}, \citenamefont {Fuchs},\ and\ \citenamefont {Zhu}}]{ApplFZ15G}%
  \BibitemOpen
  \bibfield  {author} {\bibinfo {author} {\bibfnamefont {D.~M.}\ \bibnamefont
  {Appleby}}, \bibinfo {author} {\bibfnamefont {C.~A.}\ \bibnamefont {Fuchs}},
  \ and\ \bibinfo {author} {\bibfnamefont {H.}~\bibnamefont {Zhu}},\
  }\href@noop {} {\bibfield  {journal} {\bibinfo  {journal} {Quantum Inf.
  Comput.}\ }\textbf {\bibinfo {volume} {15}},\ \bibinfo {pages} {61} (\bibinfo
  {year} {2015})}\BibitemShut {NoStop}%
\bibitem [{\citenamefont {Ferrie}\ and\ \citenamefont
  {Emerson}(2008)}]{FerrE08}%
  \BibitemOpen
  \bibfield  {author} {\bibinfo {author} {\bibfnamefont {C.}~\bibnamefont
  {Ferrie}}\ and\ \bibinfo {author} {\bibfnamefont {J.}~\bibnamefont
  {Emerson}},\ }\href@noop {} {\bibfield  {journal} {\bibinfo  {journal} {J.
  Phys. A: Math. Theor.}\ }\textbf {\bibinfo {volume} {41}},\ \bibinfo {pages}
  {352001} (\bibinfo {year} {2008})}\BibitemShut {NoStop}%
\bibitem [{\citenamefont {Ferrie}\ and\ \citenamefont
  {Emerson}(2009)}]{FerrE09}%
  \BibitemOpen
  \bibfield  {author} {\bibinfo {author} {\bibfnamefont {C.}~\bibnamefont
  {Ferrie}}\ and\ \bibinfo {author} {\bibfnamefont {J.}~\bibnamefont
  {Emerson}},\ }\href@noop {} {\bibfield  {journal} {\bibinfo  {journal} {New
  J. Phys.}\ }\textbf {\bibinfo {volume} {11}},\ \bibinfo {pages} {063040}
  (\bibinfo {year} {2009})}\BibitemShut {NoStop}%
\bibitem [{\citenamefont {Zhu}(2010)}]{Zhu10}%
  \BibitemOpen
  \bibfield  {author} {\bibinfo {author} {\bibfnamefont {H.}~\bibnamefont
  {Zhu}},\ }\href@noop {} {\bibfield  {journal} {\bibinfo  {journal} {J. Phys.
  A: Math. Theor.}\ }\textbf {\bibinfo {volume} {43}},\ \bibinfo {pages}
  {305305} (\bibinfo {year} {2010})}\BibitemShut {NoStop}%
\bibitem [{\citenamefont {Caldwell}\ and\ \citenamefont
  {Honaker}(2009)}]{CaldH09book}%
  \BibitemOpen
  \bibfield  {author} {\bibinfo {author} {\bibfnamefont {C.~K.}\ \bibnamefont
  {Caldwell}}\ and\ \bibinfo {author} {\bibfnamefont {G.~L.}\ \bibnamefont
  {Honaker}},\ }\href@noop {} {\emph {\bibinfo {title} {Prime Curios!: The
  Dictionary of Prime Number Trivia}}}\ (\bibinfo  {publisher} {CreateSpace
  Independent Publishing Platform},\ \bibinfo {year} {2009})\BibitemShut
  {NoStop}%
\bibitem [{\citenamefont {Ash}(2007)}]{Ash07book}%
  \BibitemOpen
  \bibfield  {author} {\bibinfo {author} {\bibfnamefont {R.~B.}\ \bibnamefont
  {Ash}},\ }\href@noop {} {\emph {\bibinfo {title} {Basic Abstract Algebra: For
  Graduate Students and Advanced Undergraduates}}}\ (\bibinfo  {publisher}
  {Dover},\ \bibinfo {address} {New York},\ \bibinfo {year} {2007})\BibitemShut
  {NoStop}%
\bibitem [{\citenamefont {Feit}(1967)}]{Feit67book}%
  \BibitemOpen
  \bibfield  {author} {\bibinfo {author} {\bibfnamefont {W.}~\bibnamefont
  {Feit}},\ }\href@noop {} {\emph {\bibinfo {title} {Characters of Finite
  Groups}}}\ (\bibinfo  {publisher} {Benjamin},\ \bibinfo {address} {New
  York},\ \bibinfo {year} {1967})\BibitemShut {NoStop}%
\bibitem [{\citenamefont {Kurzweil}\ and\ \citenamefont
  {Stellmacher}(2004)}]{KurzS04book}%
  \BibitemOpen
  \bibfield  {author} {\bibinfo {author} {\bibfnamefont {H.}~\bibnamefont
  {Kurzweil}}\ and\ \bibinfo {author} {\bibfnamefont {B.}~\bibnamefont
  {Stellmacher}},\ }\href@noop {} {\emph {\bibinfo {title} {The Theory of
  Finite Groups: An Introduction}}}\ (\bibinfo  {publisher} {Springer},\
  \bibinfo {address} {New York},\ \bibinfo {year} {2004})\BibitemShut {NoStop}%
\bibitem [{\citenamefont {Isaacs}(1976)}]{Isaa76}%
  \BibitemOpen
  \bibfield  {author} {\bibinfo {author} {\bibfnamefont {I.~M.}\ \bibnamefont
  {Isaacs}},\ }\href@noop {} {\emph {\bibinfo {title} {Character Theory of
  Finite Groups}}}\ (\bibinfo  {publisher} {Academic Press},\ \bibinfo
  {address} {New York},\ \bibinfo {year} {1976})\BibitemShut {NoStop}%
\bibitem [{\citenamefont {Singer}(1938)}]{Sing38}%
  \BibitemOpen
  \bibfield  {author} {\bibinfo {author} {\bibfnamefont {J.}~\bibnamefont
  {Singer}},\ }\href@noop {} {\bibfield  {journal} {\bibinfo  {journal} {Trans.
  Amer. Math. Soc.}\ }\textbf {\bibinfo {volume} {43}},\ \bibinfo {pages} {377}
  (\bibinfo {year} {1938})}\BibitemShut {NoStop}%
\bibitem [{\citenamefont {Bereczky}(2000)}]{Bere00}%
  \BibitemOpen
  \bibfield  {author} {\bibinfo {author} {\bibfnamefont {{\'A}.}~\bibnamefont
  {Bereczky}},\ }\href@noop {} {\bibfield  {journal} {\bibinfo  {journal} {J.
  Algebra}\ }\textbf {\bibinfo {volume} {234}},\ \bibinfo {pages} {187}
  (\bibinfo {year} {2000})}\BibitemShut {NoStop}%
\end{thebibliography}%

\end{document}